\documentclass[12pt, twoside]{article}

\usepackage{amsmath,amsthm,amssymb}

\usepackage{times}

\usepackage{enumerate}

\pagestyle{myheadings}
 \usepackage{amsmath}
\usepackage{amssymb}
\usepackage{amsthm}
\usepackage{graphicx}
\usepackage{amscd}
\usepackage{graphicx}

\usepackage{amssymb}
\usepackage{amsmath}
\usepackage{latexsym}
\usepackage[numbers,sort&compress]{natbib}
\usepackage{verbatim}
\usepackage{physics}
\usepackage{dsfont}	
\usepackage{microtype}  
\usepackage{hyperref}
\usepackage{tikz}
\usepackage{booktabs}   

\usepackage{color}

 \textwidth 160mm
 \textheight 210mm
 \topmargin 0cm
 \oddsidemargin 0cm
 \evensidemargin 0cm
 \parskip 2mm
  \setlength{\parindent}{0pt}

\newtheorem{theorem}{Theorem}
\newtheorem{lemma}[theorem]{Lemma}



 \begin{document}

 \title{The Grassl-R\"otteler cyclic and consta-cyclic MDS codes are generalised Reed-Solomon codes}
\author{Simeon Ball\thanks{15 June 2022. The author acknowledges the support of MTM2017-82166-P and PID2020-113082GB-I00 financed by MCIN / AEI / 10.13039/501100011033, the Spanish Ministry of Science and Innovation.}}
 \date{}
\maketitle

\begin{abstract}
We prove that the cyclic and constacyclic codes constructed by Grassl and R\"otteler in \cite{GR2015} are generalised Reed-Solomon codes. 
This note can be considered as an addendum to Grassl and R\"otteler \cite{GR2015}. It can also be considered as an appendix to Ball and Vilar \cite{BV2021}, where Conjecture 11 of \cite{GR2015}, which was stated for Grassl-R\"otteler codes, is proven for generalised Reed-Solomon codes. The content of this note, together with \cite{BV2021}, therefore implies that Conjecture 11 from \cite{GR2015} is true.
\end{abstract}

\section{Introduction}

Let ${\mathbb F}_q$ denote the finite field with $q$ elements. 

The weight of an element of ${\mathbb F}_q^n$ is the number of non-zero coordinates that it has.

A $k$-dimensional linear code of length $n$ and minimum distance $d$ over ${\mathbb F}_q$, denoted as a $[n,k,d]_q$ code, is a $k$-dimensional subspace of ${\mathbb F}_q^n$ in which every non-zero vector has weight at least $d$.

The Singleton bound for linear codes states that
$$
n \geqslant k+d-1
$$
 and a linear code which attains the Singleton bound is called a maximum distance separable codes, or MDS code for short.
 
It is a simple matter to prove the bound $n \leqslant q+k-1$ and the MDS conjecture, for linear codes, states that
if $4\leqslant k \leqslant q-2$ then
$$
n \leqslant q+1.
$$
For values of $k$ outside of this range it is not difficult to determine the longest length of a linear MDS code. The MDS conjecture is known to hold for $q$ prime \cite{Ball2012}, where it was also proven that if $k \neq (q+1)/2$ and $q$ is prime then a $[q+1,k,q+2-k]_q$ MDS code is a generalised Reed-Solomon code.

Let $\{a_1,\ldots,a_q\}$ be the set of elements of ${\mathbb F}_q$.

 A {\em generalised Reed-Solomon code} over ${\mathbb F}_{q}$ is
\begin{equation} \label{GRScodede}
D=\{(\theta_1f(a_1),\ldots,\theta_{q}f(a_{q}),\theta_{q+1}f_{k-1}) \ | \ f \in {\mathbb F}_{q}[X],\ \deg f \leqslant k-1\},
\end{equation}
where $f_i$ denotes the coefficient of $X^i$ in $f(X)$ and $\theta_i \in {\mathbb F}_{q} \setminus \{0 \}$. 

The Reed-Solomon code is the case in which $\theta_j=1$, for all $j$.

We note that our definition of a (generalised) Reed-Solomon code is what some authors call the extended or doubly extended Reed-Solomon code. That is, many authors do not include the final coordinate or the evaluation at zero. However, a more natural definition of the Reed-Solomon code, which is entirely equivalent to the above, is obtained by evaluating homogeneous polynomials $f \in {\mathbb F}_{q}[X_1,X_2]$ of degree $k-1$, at the points of the projective line,
\begin{equation} \label{GRScodede2}
D=\{(\theta_1f(a_1,1),\ldots,\theta_{q}f(a_{q},1),\theta_{q+1}f(1,0)) \ | \ f \in {\mathbb F}_{q}[X_1,X_2],\ \ f \ \mathrm{homogeneous}, \ \ \deg f = k-1\}.
\end{equation}

\section{Generalised Reed-Solomon codes}

In this section we prove that a generalised Reed-Solomon code can be constructed as an evaluation code, evaluating at the $(q+1)$-st roots of unity of ${\mathbb F}_{q^2}$. Thus, any generalised Reed-Solomon code can be obtained in this way by multiplying the $i$-th coordinate by a non-zero $\theta_i \in {\mathbb F}_q$, as in definition (\ref{GRScodede}) and (\ref{GRScodede2}).

Let $\{\alpha_1,\ldots,\alpha_{q+1}\}$ be the set of $(q+1)$-st roots of unity of ${\mathbb F}_{q^2}$.

\begin{lemma} \label{DERSiscyclic}
If $k$ is odd then the code
$$
C=\{(h(\alpha_1)+h(\alpha_1)^q,\ldots,h(\alpha_{q+1})+h(\alpha_{q+1})^{q}) \ | \ h \in {\mathbb F}_{q^2}[X],\ \deg h \leqslant \tfrac{1}{2}(k-1) \}
$$
is a $[q+1,k,q+2-k]_q$ generalised Reed-Solomon code.
\end{lemma}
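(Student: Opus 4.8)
My first step is to establish that $C$ is a code over ${\mathbb F}_q$ of length $q+1$: the map $x \mapsto x + x^q$ is the trace map $\mathrm{Tr}_{{\mathbb F}_{q^2}/{\mathbb F}_q}$, so each coordinate $h(\alpha_i) + h(\alpha_i)^q$ lies in ${\mathbb F}_q$, and $C$ is an ${\mathbb F}_q$-linear subspace of ${\mathbb F}_q^{q+1}$ because the parameter set $\{h : \deg h \leqslant \tfrac12(k-1)\}$ is an ${\mathbb F}_{q^2}$-vector space closed under addition and the trace is additive. Then I would compute the dimension: the space of admissible $h$ has dimension $\tfrac12(k-1)+1$ over ${\mathbb F}_{q^2}$, hence $k+1$ over ${\mathbb F}_q$, but the evaluation-plus-trace map has a kernel; I need to show the kernel has ${\mathbb F}_q$-dimension exactly $1$ so that $\dim_{{\mathbb F}_q} C = k$. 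This is the point where the constraint that $k$ is odd and $\deg h \leqslant \tfrac12(k-1)$ becomes essential.

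The key structural step is to rewrite the coordinate functions in terms of evaluation of a single polynomial over ${\mathbb F}_{q^2}$ and then recognise the result as a GRS code via the projective definition (\ref{GRScodede2}). Since $\alpha_i^{q+1} = 1$, we have $\alpha_i^q = \alpha_i^{-1}$, so
$$
h(\alpha_i) + h(\alpha_i)^q = h(\alpha_i) + h^{(q)}(\alpha_i^{-1}),
$$
where $h^{(q)}$ denotes the polynomial obtained by raising each coefficient of $h$ to the $q$-th power. Writing $h(X) = \sum_{j=0}^{(k-1)/2} h_j X^j$, the quantity $h(\alpha_i) + h^{(q)}(\alpha_i^{-1})$ equals $\alpha_i^{-(k-1)/2}$ times a polynomial in $\alpha_i$ of degree at most $k-1$, namely $g(\alpha_i) := \sum_{j} h_j \alpha_i^{j + (k-1)/2} + \sum_j h_j^q \alpha_i^{(k-1)/2 - j}$. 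Multiplying every coordinate by the nonzero scalar $\theta_i = \alpha_i^{(k-1)/2} \in {\mathbb F}_{q^2}$ is a monomial (diagonal) transformation, which preserves the GRS property; after this scaling the $i$-th coordinate is $g(\alpha_i)$ with $\deg g \leqslant k-1$. I should check that as $h$ ranges over its space, $g$ ranges over an ${\mathbb F}_q$-subspace of ${\mathbb F}_{q^2}[X]_{\leqslant k-1}$ of dimension $k$, whose evaluation at the $q+1$ points $\alpha_1,\dots,\alpha_{q+1}$ lands in ${\mathbb F}_q^{q+1}$; this subspace is cut out by a "Hermitian self-reciprocity" condition $g^{(q)}(X) = X^{k-1} g(1/X)$ on the coefficients, and one counts that this condition leaves exactly $k$ degrees of freedom over ${\mathbb F}_q$.

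Finally, to conclude it is a GRS code I would argue that $C$ is the intersection with ${\mathbb F}_q^{q+1}$ of the ${\mathbb F}_{q^2}$-GRS code $D' = \{(\theta_1 g(\alpha_1), \dots, \theta_{q+1} g(\alpha_{q+1})) : \deg g \leqslant k-1\}$ (a standard GRS code over ${\mathbb F}_{q^2}$ with evaluation points the $(q+1)$-st roots of unity and the zero-coordinate handled via the homogeneous form (\ref{GRScodede2})), and that the subfield subcode obtained here is itself GRS over ${\mathbb F}_q$ — this is where one invokes the self-reciprocity/Hermitian structure to produce an explicit ${\mathbb F}_q$-rational polynomial basis, equivalently an explicit choice of scalars $\theta_i \in {\mathbb F}_q$ and evaluation points in ${\mathbb F}_q \cup \{\infty\}$. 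The main obstacle I anticipate is precisely this last identification: showing that the trace construction does not merely give an ${\mathbb F}_q$-linear MDS code but an honest GRS code, which requires exhibiting the GRS structure concretely (e.g. by a change of variable sending the roots of unity to ${\mathbb F}_q \cup \{\infty\}$, or by matching parity-check matrices). Verifying the minimum distance $d = q+2-k$ is then automatic from the Singleton bound once the code is shown to be GRS, or alternatively follows directly because a nonzero $g$ of degree $\leqslant k-1$ has at most $k-1$ roots among the $\alpha_i$.
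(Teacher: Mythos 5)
Your routine steps are fine (coordinates lie in ${\mathbb F}_q$ by the trace, the kernel of $h\mapsto$ codeword consists of the trace-zero constants, so $\dim_{{\mathbb F}_q}C=k$, and the MDS property follows since a nonzero self-conjugate-reciprocal $g$ of degree at most $k-1$ has at most $k-1$ roots among the $\alpha_i$). But the heart of the lemma is left unproven: after rescaling the $i$-th coordinate by $\theta_i=\alpha_i^{(k-1)/2}\in{\mathbb F}_{q^2}$ you have only exhibited $C$ as (a diagonal ${\mathbb F}_{q^2}$-transform of) the subfield subcode of an ${\mathbb F}_{q^2}$-GRS code evaluated at the $(q+1)$-st roots of unity. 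That is not what the lemma asserts: a GRS code in the sense of (\ref{GRScodede}) or (\ref{GRScodede2}) requires multipliers $\theta_i\in{\mathbb F}_q\setminus\{0\}$, evaluation points in ${\mathbb F}_q\cup\{\infty\}$ and polynomials over ${\mathbb F}_q$; a diagonal transformation with entries in ${\mathbb F}_{q^2}$ does not preserve this property (and indeed the rescaled coordinates $g(\alpha_i)$ do not even lie in ${\mathbb F}_q$, contrary to what you assert), and being an ${\mathbb F}_q$-linear MDS subfield subcode of a GRS code over ${\mathbb F}_{q^2}$ does not by itself make the code GRS over ${\mathbb F}_q$. You acknowledge exactly this point as ``the main obstacle'' and defer it to an unspecified ``change of variable sending the roots of unity to ${\mathbb F}_q\cup\{\infty\}$'' or ``matching parity-check matrices''; so the proposal reduces the statement to the statement.

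The paper closes precisely this gap by an explicit computation: write $\alpha=(x_1+ex_2)^{q-1}$ with $\{1,e\}$ an ${\mathbb F}_q$-basis of ${\mathbb F}_{q^2}$, so that $(x_1,x_2)$ runs over the projective line over ${\mathbb F}_q$ as $\alpha$ runs over the $(q+1)$-st roots of unity. Then
$$
h(\alpha)+h(\alpha)^q=(x_1+ex_2)^{-(k-1)(q+1)/2}\sum_i\Bigl(c_i(x_1+ex_2)^{\frac{k-1}{2}-i}(x_1+e^qx_2)^{\frac{k-1}{2}+i}+c_i^q(x_1+ex_2)^{\frac{k-1}{2}+i}(x_1+e^qx_2)^{\frac{k-1}{2}-i}\Bigr),
$$
where the prefactor is a power of the norm (here $k$ odd is used), hence a nonzero element of ${\mathbb F}_q$ independent of $h$, and the bracketed sum is Frobenius-invariant, hence a homogeneous polynomial of degree $k-1$ with coefficients in ${\mathbb F}_q$ evaluated at $(x_1,x_2)$. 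Combined with the dimension count $\dim_{{\mathbb F}_q}C=k$, this identifies $C$ with a code of the form (\ref{GRScodede2}). Without carrying out this (or an equivalent) explicit identification, your argument does not establish the lemma.
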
 

\begin{proof}
Note that $C$ is a subspace over ${\mathbb F}_q$ and that it has size $q^k$ since the constant term of $$
h(X)+h(X)^q
$$
is an element of ${\mathbb F}_q$. Thus, $C$ is a $k$-dimensional subspace of ${\mathbb F}_q^{q+1}$.

Let
$$
h(X)=\sum_{i=0}^{(k-1)/2} c_iX^i.
$$
Suppose that $\{1,e\}$ is a basis for ${\mathbb F}_{q^2}$ over ${\mathbb F}_q$. 

For $\alpha$, a $(q+1)$-st root of unity, let $x_1, x_2 \in {\mathbb F}_q$ be such that
$$
\alpha=(x_1+ex_2)^{q-1}.
$$
Observe that as $(x_1,x_2)$ vary over the points of the projective line, $\alpha$ will run through the distinct $(q+1)$-st roots of unity.

Then
$$
h(\alpha)+h(\alpha)^q=\sum_{i=0}^{(k-1)/2} c_i (x_1+ex_2)^{i(q-1)}+c_i^q (x_1+ex_2)^{i(1-q)}
$$
$$
=\sum_{i=0}^{(k-1)/2} c_i (x_1+e^qx_2)^{i}(x_1+ex_2)^{-i}+c_i^q (x_1+ex_2)^{i}(x_1+e^qx_2)^{-i}
$$
$$
=(x_1+ex_2)^{-(k-1)(q+1)/2}\Big(\sum_i c_i (x_1+ex_2)^{(k-1)/2-i}(x_1+e^qx_2)^{(k-1)/2+i} 
$$
$$
+c_i^q (x_1+ex_2)^{(k-1)/2+i}(x_1+e^qx_2)^{(k-1)/2-i}\Big).
$$
Note that $(x_1+ex_2)^{-(k-1)(q+1)/2} \in {\mathbb F}_q$, does not depend on $h(X)$. 

Thus, the coefficient of $x_1^jx_2^{k-j-1}$ of 
$$
\sum_{i=0}^{(k-1)/2} c_i (x_1+ex_2)^{(k-1)/2-i}(x_1+e^qx_2)^{(k-1)/2+i}+c_i^q (x_1+ex_2)^{(k-1)/2+i}(x_1+e^qx_2)^{(k-1)/2-i},
$$
is also an element of ${\mathbb F}_q$. Hence, the $\alpha$ coordinate of a codeword of $C$ is the evaluation of a homogeneous polynomial in ${\mathbb F}_q[x_1,x_2]$ of degree $k-1$, multiplied by a non-zero element of ${\mathbb F}_q$. By definition (\ref{GRScodede2}), we conclude that such a code $C$ is a generalised Reed-Solomon code.
\end{proof}

The previous lemma only applies to the case when $k$ is odd. The following lemma deals with the case $k$ is even.

\begin{lemma} \label{DERSiscyclic2}
For $\alpha_i$, a $(q+1)$-st root of unity, let $\omega_i$ be such that $\alpha_i=\omega_i^{q-1}$. If $k$ is even then the code
$$
C=\{ \omega_1^qh(\alpha_1)+\omega_1 h(\alpha_1)^q,\ldots, \omega_{q+1}^{q} h(\alpha_{q+1})+\omega_{q+1}h(\alpha_{q+1})^{q}) \ | \ h \in {\mathbb F}_{q^2}[X],\ \deg h \leqslant \tfrac{1}{2}k-1 \}
$$
is a $[q+1,k,q+2-k]_q$ generalised Reed-Solomon code.
\end{lemma}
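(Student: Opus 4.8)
The plan is to mimic the proof of Lemma~\ref{DERSiscyclic} almost verbatim, tracking the extra twist factor $\omega_i$ carefully. First I would check that $C$ is an ${\mathbb F}_q$-subspace of ${\mathbb F}_q^{q+1}$ of the right dimension: the map $h \mapsto (\omega_i^q h(\alpha_i) + \omega_i h(\alpha_i)^q)_i$ is ${\mathbb F}_q$-linear, and since the expression $\omega^q z + \omega z^q = \mathrm{Tr}_{{\mathbb F}_{q^2}/{\mathbb F}_q}(\omega^q z)$ always lies in ${\mathbb F}_q$, each coordinate is genuinely in ${\mathbb F}_q$. The domain $\{h : \deg h \leqslant \tfrac12 k - 1\}$ has ${\mathbb F}_{q^2}$-dimension $k/2$, hence ${\mathbb F}_q$-dimension $k$, and one must observe the map is injective; so $C$ is a $k$-dimensional code. (A cleaner route: once we show every nonzero codeword is the evaluation of a nonzero degree-$(k-1)$ form, the resulting code sits inside a GRS $[q+1,k]_q$ code, and equality of dimensions follows, so injectivity need not be argued separately.)

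Next, writing $h(X) = \sum_{i=0}^{k/2-1} c_i X^i$ and $\alpha = \omega^{q-1}$ with $\omega = x_1 + e x_2$ as in the previous proof, I would expand
$$
\omega^q h(\alpha) + \omega h(\alpha)^q = \sum_{i=0}^{k/2-1} c_i\, \omega^q \omega^{i(q-1)} + c_i^q\, \omega\, \omega^{i(1-q)}.
$$
Using $\omega^q = x_1 + e^q x_2$ and the same manipulation $\omega^{i(q-1)} = (x_1+e^q x_2)^i (x_1 + e x_2)^{-i}$, the $i$-th term becomes $c_i (x_1 + e^q x_2)^{i+1}(x_1+ex_2)^{-i} + c_i^q (x_1+ex_2)^{i+1}(x_1+e^q x_2)^{-i}$. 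Now I would factor out the common power $(x_1 + e x_2)^{-(k/2-1)}(x_1 + e^q x_2)^{-(k/2-1)}$ — whose product is $(x_1+ex_2)^{-(k/2-1)(q+1)} \in {\mathbb F}_q$ — leaving inside the bracket a sum of the form
$$
\sum_{i=0}^{k/2-1} c_i (x_1+e^q x_2)^{k/2+i}(x_1+ex_2)^{k/2-1-i} + c_i^q (x_1+ex_2)^{k/2+i}(x_1+e^q x_2)^{k/2-1-i},
$$
which is a homogeneous polynomial in $x_1, x_2$ of degree $(k/2+i) + (k/2-1-i) = k-1$, exactly as needed. The exponents are all nonnegative because $0 \leqslant i \leqslant k/2 - 1$.

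Finally I would verify that the coefficient of each monomial $x_1^j x_2^{k-1-j}$ in that bracketed sum lies in ${\mathbb F}_q$: the two halves of the sum are Frobenius conjugates of each other (swapping $e \leftrightarrow e^q$ sends the first half to the second and $c_i \mapsto c_i^q$), so the whole bracket is fixed by the $q$-power map coefficientwise, hence has coefficients in ${\mathbb F}_q$. Therefore the $\alpha$-coordinate of a codeword of $C$ is a fixed nonzero ${\mathbb F}_q$-scalar times the evaluation at $(x_1,x_2)$ of a degree-$(k-1)$ homogeneous polynomial over ${\mathbb F}_q$, and by definition~(\ref{GRScodede2}) $C$ is a generalised Reed-Solomon code; the parameters $[q+1,k,q+2-k]_q$ follow since every GRS code of dimension $k$ and length $q+1$ is MDS. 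The main thing to get right — the only real obstacle — is the bookkeeping of which power of $\omega$ and $\omega^q$ is pulled out so that all remaining exponents stay nonnegative and the total degree comes out to exactly $k-1$; the extra single factor of $\omega^q$ (resp.\ $\omega$) coming from the twist is precisely what makes the even-$k$ case balance, shifting the symmetric split $(k-1)/2 \pm i$ of the odd case to the asymmetric split $k/2 + i$, $k/2 - 1 - i$ here.
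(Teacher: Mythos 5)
Your proposal is correct and follows essentially the same route as the paper: the identical expansion of $\omega^q h(\alpha)+\omega h(\alpha)^q$ with $\omega=x_1+ex_2$, factoring out $(x_1+ex_2)^{-(\frac{1}{2}k-1)(q+1)}\in{\mathbb F}_q$, and recognising the bracket as a homogeneous form of degree $k-1$ over ${\mathbb F}_q$ so that definition~(\ref{GRScodede2}) applies. The extra details you supply (the Frobenius-conjugacy argument for the coefficients lying in ${\mathbb F}_q$, and the dimension/injectivity count giving $\dim C=k$) are exactly the points the paper leaves implicit by deferring to Lemma~\ref{DERSiscyclic}, and your bookkeeping of the exponents $\tfrac{1}{2}k+i$, $\tfrac{1}{2}k-1-i$ matches the paper's.
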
 

\begin{proof}
The proof is similar to that of Lemma~\ref{DERSiscyclic}. In this case we have that, $\omega=x_1+ex_2$ and so
$$
\omega^qh(\alpha)+\omega h(\alpha)^q=\sum_{i=0}^{ \tfrac{1}{2}k-1} c_i (x_1+ex_2)^{i(q-1)+q}+c_i^q (x_1+ex_2)^{i(1-q)+1}
$$
$$
=(x_1+ex_2)^{-(\tfrac{1}{2}k-1)(q+1)}\left(\sum_i c_i (x_1+ex_2)^{\tfrac{1}{2}k-1-i}(x_1+e^qx_2)^{\tfrac{1}{2}k+i} \right.
$$
$$
\left.
+c_i^q (x_1+ex_2)^{\tfrac{1}{2}k+i}(x_1+e^qx_2)^{\tfrac{1}{2}k-1-i}\right).
$$
The coefficient of $x_1^jx_2^{k-j-1}$,
$$
\sum_i c_i (x_1+ex_2)^{\tfrac{1}{2}k-1-i}(x_1+e^qx_2)^{\tfrac{1}{2}k+i} +c_i^q (x_1+ex_2)^{\tfrac{1}{2}k+i}(x_1+e^qx_2)^{\tfrac{1}{2}k-1-i},
$$
is an element of ${\mathbb F}_q$. Thus, the lemma follows in the same way as Lemma~\ref{DERSiscyclic}.\end{proof}

\section{Grassl-R\"otteler cyclic and constacyclic MDS codes}

The $k$-dimensional cyclic or constacyclic code $\langle g \rangle$ of length $n$ over ${\mathbb F}_q$,
where
$$
g(X)=\sum_{j=0}^{n-k} c_j X^j \in {\mathbb F}_q[X],
$$
is a linear code of length $n$ spanned by the $k$ cyclic shifts of the codeword 
$$
(c_0,\ldots,c_{n-k},0,\ldots,0).
$$
It is a cyclic code if $g$ divides $X^{n}-1$ and constacyclic code if $g$ divides $X^{n}-\eta$, for some $\eta \neq 1$.
See \cite{Ball2020} or \cite{VanLint1999} for the basic results concerning cyclic codes.

In \cite{GR2015}, Grassl and R\"otteler introduced three $[q+1,k,q+2-k]_q$ MDS codes, the first two are constructed as cyclic codes and the third as a constacyclic code. As mentioned in the introduction, it follows from \cite{Ball2012} that when $q$ is prime, these codes are generalised Reed-Solomon codes. In this section we shall prove that they are generalised Reed-Solomon codes for all $q$.

Let $\omega$ be a primitive element of ${\mathbb F}_{q^2}$ and let $\alpha=w^{q-1}$, a primitive $(q+1)$-st root of unity.

The Grassl-R\"otteler codes depend on the parity of $q$ and $k$.

For $q$ and $k$ both odd, and $q$ and $k$ both even, the Grassl-R\"otteler code is $\langle g_1 \rangle$, where
$$
g_1(X)=\prod_{i=-r}^r (X- \alpha^i).
$$
For $k$ odd and $q$ even, the Grassl-R\"otteler code is the cyclic code $\langle g_2 \rangle$, where
$$
g_2(X)=\prod_{i=\frac{1}{2}q-r}^{\frac{1}{2}q+r+1} (X- \alpha^i).
$$
And for $k$ even and $q$ odd, the Grassl-R\"otteler code is the constacyclic code $\langle g_3 \rangle$, where
$$
g_3(X)=\prod_{i=-r+1}^{r} (X-\omega \alpha^i).
$$

It is a simple matter to check that for $i \in \{1,2,3\}$, $g_i \in {\mathbb F}_q[X]$ and for $i \in \{1,2\}$, the polynomial $g_i$ divides $X^{q+1}-1$ and $g_3$ divides $X^{q+1}-\omega^{q+1}$.

We now treat each of the four cases, which depends on the parity of $k$ and $q$, in turn and prove that they are all generalised Reed-Solomon codes.

Let $\{e_1,\ldots,e_{q+1}\}$ be the canonical basis of ${\mathbb F}_q^{q+1}$.

Let $\beta \in {\mathbb F}_{q^2}$ be such that $\beta+\beta^q=1$.

\begin{theorem} \label{koddqodd}
If $k$ and $q$ are both odd then the $[q+1,k,q+2-k]_q$ code $\langle g_1 \rangle$ is a generalised Reed-Solomon code.
\end{theorem}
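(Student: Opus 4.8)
The plan is to identify the cyclic code $\langle g_1\rangle$ with the evaluation code $C$ of Lemma~\ref{DERSiscyclic}, which we already know to be a generalised Reed-Solomon code. Both codes live in ${\mathbb F}_q^{q+1}$ with coordinates indexed by the $(q+1)$-st roots of unity $\alpha_1,\dots,\alpha_{q+1}$ of ${\mathbb F}_{q^2}$ (for $\langle g_1\rangle$ this is the natural indexing of a cyclic code of length $q+1$ via its Mattson–Solomon / discrete Fourier transform description). Since both are $k$-dimensional subspaces of the same space, it suffices to show one is contained in the other, so I would aim to exhibit each codeword of $C$ as lying in $\langle g_1\rangle$, or equivalently to match up their zero sets under the Fourier transform.

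The key computational step is to describe $\langle g_1\rangle$ in the frequency domain. A vector $(v_1,\dots,v_{q+1})\in{\mathbb F}_q^{q+1}$ lies in the cyclic code with generator $g_1(X)=\prod_{i=-r}^r(X-\alpha^i)$ precisely when its Mattson–Solomon polynomial vanishes at $\alpha^{-i}$ for $i=-r,\dots,r$; equivalently, $v_j = F(\alpha_j)$ for some $F\in{\mathbb F}_{q^2}[X]/(X^{q+1}-1)$ whose support as a polynomial in $X$ (taken with exponents in a window of length $k$) avoids the exponents $-r,\dots,r$. Writing $k = 2r+2$ is wrong for $k$ odd, so in fact $n-k = 2r+1$, i.e. $k = q+1-(2r+1) = q-2r$, and the admissible exponents form a set of size $k$ complementary to $\{-r,\dots,r\}$ in ${\mathbb Z}/(q+1)$. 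The task is then to check that $h(\alpha)+h(\alpha)^q$, with $\deg h\leqslant \tfrac12(k-1)$, has Mattson–Solomon support exactly in this complementary window: expanding $h(X)+h(X)^q = \sum_{i=0}^{(k-1)/2} c_i X^i + c_i^q X^{-i}$ as a function on the roots of unity (using $\alpha^q=\alpha^{-1}$), its exponents range over $\{-\tfrac{k-1}{2},\dots,\tfrac{k-1}{2}\}$, a set of size $k$. I would verify this exponent set is exactly the complement of $\{-r,\dots,r\}$ modulo $q+1$, which is where the identity $k = q-2r$ and the parity hypotheses are used — shifting $\{-\tfrac{k-1}{2},\dots,\tfrac{k-1}{2}\}$ by $\tfrac{q+1}{2}$ (an integer since $q$ is odd) should land it on the complement of the central window.

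The main obstacle I anticipate is bookkeeping with the Mattson–Solomon transform over ${\mathbb F}_{q^2}$ versus the ${\mathbb F}_q$-linear structure: a codeword of a cyclic code over ${\mathbb F}_q$ corresponds to a polynomial $F$ over ${\mathbb F}_{q^2}$ subject to the conjugacy constraint $F^{(q)}(X) = F(X^q)$ (coefficientwise Frobenius composed with $X\mapsto X^q$), and one must check that the map $h \mapsto h(X)+h(X)^q$ surjects onto exactly the $F$ satisfying both the support condition and the conjugacy condition, with the right dimension count $q^k$. Once the support/conjugacy matching is done, containment $C\subseteq\langle g_1\rangle$ follows, and equality follows by dimension; the GRS conclusion is then immediate from Lemma~\ref{DERSiscyclic}. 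An alternative, perhaps cleaner, route that avoids the transform entirely is to argue directly: show every codeword of $\langle g_1\rangle$ has its $\alpha_j$-coordinate equal to the evaluation at $\alpha_j$ of some $h(X)+h(X)^q$ by checking the generating codeword $(c_0,\dots,c_{n-k},0,\dots,0)$ and its cyclic shifts all have this form, since a cyclic shift corresponds to multiplication by $\alpha_j$ in the evaluation, i.e. to replacing $h(X)$ by $\alpha_j^{-(k-1)/2}$ times a shifted polynomial — I would present whichever of these two is shorter.
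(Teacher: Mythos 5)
Your frequency-domain framing is close in spirit to the paper's argument (the paper in effect writes down the inverse transform explicitly, exhibiting for each shift $a$ a polynomial $h_a$ with $h_a(\alpha^s)+h_a(\alpha^s)^q=(-1)^s c_{s-a}$), but there is a genuine gap in your plan: $C$ and $\langle g_1\rangle$ are \emph{not} the same code, and neither of the containments you propose holds as stated. Your own computation shows why. Codewords of $C$ are evaluations of functions with Fourier support $\{-\tfrac{k-1}{2},\dots,\tfrac{k-1}{2}\}$ (mod $q+1$), whereas codewords of $\langle g_1\rangle$ correspond to functions with support in the complement of $\{-r,\dots,r\}$, namely $\{r+1,\dots,q-r\}$. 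These two exponent sets are not equal; they differ by the shift $\tfrac{q+1}{2}$, as you note. But a shift of the support by $\tfrac{q+1}{2}$ is multiplication of the evaluated function by $X^{(q+1)/2}$, i.e.\ multiplication of the coordinate at $\alpha^s$ by $\alpha^{s(q+1)/2}=(-1)^s$, which is a nontrivial twist since $q$ is odd. A concrete check: for $q=3$, $k=1$ one has $g_1=(X-1)(X^2+1)$ with generator word $(-1,1,-1,1)$, while $C$ consists of the constant vectors; so the generator word and its cyclic shifts are \emph{not} of the form $(h(\alpha^s)+h(\alpha^s)^q)_s$, which also sinks the ``alternative route'' at the end of your proposal.

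The missing step --- and it is exactly how the paper proceeds --- is to show that multiplying the $(s+1)$-st coordinate of $\langle g_1\rangle$ by $(-1)^s$ gives a code contained in $C$ (hence equal to it, both having dimension $k$), and then to observe that the generalised Reed--Solomon property is insensitive to such coordinatewise scaling because the definition already allows arbitrary nonzero $\theta_i$. Once that twist is inserted, your support computation (using $k=q-2r$ and the fact that $\tfrac{q+1}{2}$ is an integer for $q$ odd) does carry the proof through, and the surjectivity/conjugacy issue you worry about is unnecessary: one containment plus the dimension count suffices, since Lemma~\ref{DERSiscyclic} already gives $\dim C=k$ and $\deg g_1=q+1-k$ gives $\dim\langle g_1\rangle=k$.
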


\begin{proof}
Let $c_j$ be defined by
$$
g_1(X)=\prod_{i=-r}^r (X- \alpha^i)=\sum_{j=0}^{2r+1} c_j X^j.
$$
Observe that $k=q-2r$.

We will prove that, for $a \in \{0,\ldots,k-1\}$,
$$
\sum_{s=a}^{q+1-k+a} (-1)^s c_{s-a} e_{s+1}
=(\underbrace{0,\ldots,0}_{a},(-1)^a c_0,\ldots,(-1)^{q+1-k+a}c_{q+1-k},\underbrace{0,\ldots,0}_{k-1-a})
$$
are the evaluations of polynomials,
$$
h(X)+h(X)^q
$$
where $h \in {\mathbb F}_{q^2}[X]$ is of degree at most $(k-1)/2$, evaluated at the $(q+1)$-st roots of unity. 

Lemma~\ref{DERSiscyclic} implies that if we multiply the $(s+1)$-th coordinate of the codewords in $\langle g_1\rangle$ by $(-1)^s$ then we obtain a generalised Reed-Solomon code, which implies that $\langle g_1\rangle$ is a generalised Reed-Solomon code.

For $a \in \{0,\ldots,k-1\}$, define
$$
h_a(X)=\sum_{i=1}^{(q-1)/2} \sum_{j=0}^{2r+1} c_j \alpha^{i(j+a)} X^{(q+1)/2-i}+ \sum_{j=0}^{2r+1} c_j(-1)^{j+a}\beta+\sum_{j=0}^{2r+1} c_j \beta X^{(q+1)/2}.
$$
For all $i \in \{0,\ldots,r\}$,
$$
\sum_{j=0}^{q} c_j \alpha^{ij}=0,
$$
since $g_1(\alpha^i)=0$.
 Thus, the degree of $h_a$ is at most $(q-1)/2-r=(k-1)/2$. 

We have that
$$
h_a(\alpha^s)=\sum_{i=1}^{(q-1)/2} \sum_{j=0}^{2r+1} c_j \alpha^{i(j+a-s)}(-1)^s+ \sum_{j=0}^{2r+1} c_j(-1)^{j+a}\beta+\sum_{j=0}^{2r+1} c_j \beta(-1)^s.
$$
Since,
$$
(\sum_{i=1}^{(q-1)/2} c_j \alpha^{i(j+a-s)})^q=\sum_{i=(q+3)/2}^{q}c_j   \alpha^{i(j+a-s)},
$$
and $\beta+\beta^q=1$, it follows that
$$
h_a(\alpha^s)+h_a(\alpha^s)^q=(-1)^s  \sum_{j=0}^{2r+1}\sum_{i=0}^{q}c_j\alpha^{i(j+a-s)}.
$$
Since $\sum_{i=0}^{q} \alpha^{ij}=0$ unless $j=0$, in which case it is one,
$$
h_a(\alpha^s)+h_a(\alpha^s)^q=(-1)^sc_{s-a},
$$
which is precisely what we had to prove.
\end{proof}

We next deal with the case $k$ and $q$ are both even, since this is again the code $\langle g_1 \rangle$.

\begin{theorem} \label{kevenqeven}
If $k$ and $q$ are both even then the $[q+1,k,q+2-k]_q$ code $\langle g_1 \rangle$ is a generalised Reed-Solomon code.
\end{theorem}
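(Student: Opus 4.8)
The plan is to run the argument of Theorem~\ref{koddqodd} with Lemma~\ref{DERSiscyclic2} in place of Lemma~\ref{DERSiscyclic}, since $k$ is now even. Write
$$
g_1(X)=\prod_{i=-r}^{r}(X-\alpha^i)=\sum_{j=0}^{2r+1}c_jX^j ,
$$
so that $k=q-2r$, and set $c_j=0$ for $j\notin\{0,\dots,2r+1\}$. The code $\langle g_1\rangle$ is spanned by the $k$ cyclic shifts $\sum_{s=a}^{2r+1+a}c_{s-a}e_{s+1}$, $a\in\{0,\dots,k-1\}$. Index the $(q+1)$-st roots of unity so that $\alpha_{s+1}=\alpha^s$ and $\omega_{s+1}=\omega^s$ for $s=0,\dots,q$; this is an admissible choice of the $\omega_i$ in Lemma~\ref{DERSiscyclic2}, since $\alpha^s=(\omega^s)^{q-1}$.

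I would aim to produce scalars $\lambda_0,\dots,\lambda_q\in\mathbb{F}_q\setminus\{0\}$, independent of $a$, together with polynomials $h_a\in\mathbb{F}_{q^2}[X]$ of degree at most $\tfrac{1}{2}k-1$, such that
$$
\omega_s^{q}h_a(\alpha^s)+\omega_s h_a(\alpha^s)^{q}=\lambda_s c_{s-a}\qquad (s=0,\dots,q).
$$
Granting this, rescaling the $(s+1)$-st coordinate of $\langle g_1\rangle$ by $\lambda_s$ maps every spanning codeword into the code $C$ of Lemma~\ref{DERSiscyclic2}; since $C$ and this rescaling of $\langle g_1\rangle$ both have dimension $k$ over $\mathbb{F}_q$, they coincide, and hence $\langle g_1\rangle$ is a coordinate rescaling of a generalised Reed--Solomon code and therefore is one.

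For the construction I would take
$$
h_a(X)=\sum_{i=1}^{q/2}\Big(\sum_{j=0}^{2r+1}c_j\alpha^{i(j+a)}\Big)X^{q/2-i}.
$$
The coefficient of $X^{q/2-i}$ equals $\alpha^{ia}g_1(\alpha^i)$, which vanishes for $i=1,\dots,r$, so $\deg h_a\leqslant q/2-r-1=\tfrac{1}{2}k-1$, as required. To evaluate, I would use $\omega_s^{q}=\omega_s\alpha^s$ (from $\alpha=\omega^{q-1}$) and $\alpha^q=\alpha^{-1}$: a short computation shows that $\omega_s^{q}h_a(\alpha^s)$ and $\omega_s h_a(\alpha^s)^{q}$ both carry the common factor $\lambda_s:=\omega^s\alpha^{-sq/2}$, the first being $\lambda_s\sum_{i=1}^{q/2}\sum_j c_j\alpha^{i(j+a-s)}$ and the second, after $\alpha^{-i}=\alpha^{q+1-i}$, being $\lambda_s\sum_{i=q/2+1}^{q}\sum_j c_j\alpha^{i(j+a-s)}$. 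Adding them, the indices $i$ fill $\{1,\dots,q\}$ exactly once, so the total equals $\lambda_s\big(\sum_j c_j\sum_{i=0}^{q}\alpha^{i(j+a-s)}-g_1(1)\big)$. Here $\sum_{i=0}^{q}\alpha^{it}$ is $1$ if $(q+1)\mid t$ and $0$ otherwise, and $j+a-s$ ranges only over $\{-q,\dots,q\}$ because $2r+k=q$, so the double sum picks out exactly $c_{s-a}$; moreover $g_1(1)=0$ since $\alpha^0=1$ is a root of $g_1$. Hence the left-hand side is $\lambda_s c_{s-a}$, as wanted.

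The only substantive checks are two. First, one must confirm that $\lambda_s$ is a non-zero element of $\mathbb{F}_q$: writing $\lambda_s=\omega^{s(1-q(q-1)/2)}$ and using $1-q(q-1)/2=-\tfrac{q-2}{2}(q+1)$ --- an integer because $q$ is even --- gives $\lambda_s=(\omega^{q+1})^{-(q-2)s/2}$, and $\omega^{q+1}$ generates the cyclic group $\mathbb{F}_q^{\ast}$ of order $q-1$. Second, one must verify the bookkeeping that the two blocks $\{1,\dots,q/2\}$ and $\{q/2+1,\dots,q\}$ coming from $h_a$ and from $h_a(\cdot)^q$ partition $\{1,\dots,q\}$; this is where the parity of $q$ is used. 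Pleasantly, this case is somewhat cleaner than Theorem~\ref{koddqodd}: there is no half-point $\tfrac{q+1}{2}$ to accommodate, and the single missing index $i=0$ needs no correction because its contribution is $\lambda_s g_1(1)=0$, so no auxiliary element $\beta$ with $\beta+\beta^q=1$ is required.
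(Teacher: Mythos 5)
Your proof is correct and follows essentially the same route as the paper: the same polynomial $h_a$ (the paper's additional term $\sum_j c_j\beta X^{q/2}$ is identically zero because $g_1(1)=0$, which is exactly how you dispense with $\beta$), the same degree bound coming from the roots of $g_1$, and the same conclusion via Lemma~\ref{DERSiscyclic2} after rescaling the coordinates. Your multiplier $\lambda_s=(\omega^{q+1})^{-s(q-2)/2}$ agrees with the paper's $\omega^{sq(q+1)/2}$ modulo the order $q-1$ of $\omega^{q+1}$ (the paper's printed exponent $\tfrac{1}{2}s(q+1)$ appears to be a typo, as it is not an integer for $s$ odd), and your explicit verification that $\lambda_s\in\mathbb{F}_q\setminus\{0\}$ is a welcome extra detail.
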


\begin{proof}
We can simply copy the proof of Theorem~\ref{koddqodd} until we define $h_a(X)$. Then we have to define $h_a(X)$ differently, partly because we will apply Lemma~\ref{DERSiscyclic2} in place of Lemma~\ref{DERSiscyclic}.

For $a \in \{0,\ldots,k-1\}$, define
$$
h_a(X)=\sum_{i=1}^{\tfrac{1}{2}q} \sum_{j=0}^{2r+1} c_j \alpha^{i(j+a)} X^{\tfrac{1}{2}q-i}+\sum_{j=0}^{2r+1} c_j \beta X^{\tfrac{1}{2}q}.
$$
Observe that, since $g_1(\alpha^i)=0$, which implies that
$$
\sum_{j=0}^{q} c_j \alpha^{ij}=0
$$
for all $i \in \{0,\ldots,r\}$. Thus, the degree of $h_a$ is at most $\tfrac{1}{2}q-r-1=\tfrac{1}{2}k-1$. 

As before, let $\omega$ be a fixed primitive element of ${\mathbb F}_{q^2}$ and let $\alpha=\omega^{q-1}$, a primitive $(q+1)$-st root of unity. Then
$$
h_a(\alpha^s)=\sum_{i=1}^{\tfrac{1}{2}q} \sum_{j=0}^{2r+1} c_j \alpha^{i(j+a-s)}\alpha^{\tfrac{1}{2}sq}+ \sum_{j=0}^{2r+1} c_j\beta \alpha^{\tfrac{1}{2}sq}.
$$
and so
$$
\alpha^{-s} h_a(\alpha^s)^q=\sum_{i=1}^{\tfrac{1}{2}q} \sum_{j=0}^{2r+1} c_j \alpha^{-i(j+a-s)}\alpha^{-\tfrac{1}{2}sq-s}+ \sum_{j=0}^{2r+1} c_j\beta^q \alpha^{-\tfrac{1}{2}sq-s}.
$$

Since, $\beta+\beta^q=1$ and $\alpha^{-\tfrac{1}{2}sq-s}=\alpha^{\tfrac{1}{2}sq}$, it follows that
$$
h_a(\alpha^s)+\alpha^{-s}h_a(\alpha^s)^q=\alpha^{\tfrac{1}{2}sq} \sum_{j=0}^{2r+1}\sum_{i=0}^{q}c_j\alpha^{i(j+a-s)}.
$$
Since $\sum_{i=0}^{q} \alpha^{ij}=0$ unless $j=0$, in which case it is one,
$$
h_a(\alpha^s)+\alpha^{-s}h_a(\alpha^s)^q= \alpha^{\tfrac{1}{2}sq} c_{s-a}.
$$
Hence,
$$
\omega^{sq} h_a(\alpha^s)+\omega^{s}h_a(\alpha^s)^q=\omega^{\tfrac{1}{2}s(q+1)} c_{s-a}.
$$

Lemma~\ref{DERSiscyclic2} implies that if we multiply the $(s+1)$-th coordinate of the codewords in $\langle g_1\rangle$ by $\omega^{\tfrac{1}{2}s(q+1)}$ then we obtain a generalised Reed-Solomon code, which implies that $\langle g_1\rangle$ is a generalised Reed-Solomon code.
\end{proof}

The next theorem deals with the case $k$ is odd and $q$ is even. In this case the Grassl-R\"otteler code is  $\langle g_2 \rangle$.

\begin{theorem} \label{koddqeven}
If $k$ is odd and $q$ is even then the $[q+1,k,q+2-k]_q$ code $\langle g_2 \rangle$ is a generalised Reed-Solomon code.
\end{theorem}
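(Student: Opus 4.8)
The plan is to imitate the proof of Theorem~\ref{koddqodd}. Write $g_2(X)=\sum_{j=0}^{2r+2}c_jX^j$; counting the zeros of $g_2$ gives $k=q-1-2r$, hence $(k-1)/2=\tfrac{1}{2}q-r-1$, and $q$ even forces $k$ odd. For each shift $a\in\{0,\dots,k-1\}$ I would exhibit $h_a\in{\mathbb F}_{q^2}[X]$ with $\deg h_a\leqslant(k-1)/2$ such that
$$
h_a(\alpha^s)+h_a(\alpha^s)^q=c_{s-a}\qquad\text{for }s=0,\dots,q,
$$
where $c_m:=0$ for $m\notin\{0,\dots,2r+2\}$. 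The right-hand side is the $(s+1)$-st coordinate of the $a$-th cyclic shift of $(c_0,\dots,c_{2r+2},0,\dots,0)$, and these $k$ shifts span $\langle g_2\rangle$; the left-hand side is, by Lemma~\ref{DERSiscyclic} (which applies because $k$ is odd, taking the coordinates to be indexed by $\alpha_{s+1}=\alpha^s$), the $(s+1)$-st coordinate of the codeword of the generalised Reed-Solomon code $C$ of that lemma corresponding to $h_a$. So $\langle g_2\rangle\subseteq C$, and since both have dimension $k$ we get $\langle g_2\rangle=C$, a generalised Reed-Solomon code. Unlike in Theorems~\ref{koddqodd} and~\ref{kevenqeven}, no rescaling of the coordinates is needed, because for $q$ even the natural scalar in front of $c_{s-a}$ turns out to be $1$.

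The one genuinely new point is the construction of $h_a$. Since the zeros of $g_2$ are the $\alpha^i$ with $i$ running over the interval $\{\tfrac{1}{2}q-r,\dots,\tfrac{1}{2}q+r+1\}$, which is centred near $\tfrac{1}{2}q$ rather than near $0$, and since $q$ is even (so $q+1$ is odd and there is no analogue of the element $\alpha^{(q+1)/2}=-1$ used in Theorem~\ref{koddqodd}), I would reindex and put
$$
h_a(X)=\beta\sum_{j=0}^{2r+2}c_j+\sum_{i=\tfrac{1}{2}q+1}^{q}\Big(\sum_{j=0}^{2r+2}c_j\alpha^{i(j+a)}\Big)X^{q+1-i}.
$$
The coefficient of $X^{q+1-i}$ equals $\alpha^{ia}g_2(\alpha^i)$, which vanishes precisely when $\alpha^i$ is a zero of $g_2$; for $i$ in the range $\{\tfrac{1}{2}q+1,\dots,q\}$ this kills exactly the terms with $i\in\{\tfrac{1}{2}q+1,\dots,\tfrac{1}{2}q+r+1\}$, that is, the terms of degrees $\tfrac{1}{2}q,\tfrac{1}{2}q-1,\dots,\tfrac{1}{2}q-r$, leaving $\deg h_a\leqslant\tfrac{1}{2}q-r-1=(k-1)/2$. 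Evaluating at $\alpha^s$ with $\alpha^{q+1}=1$ gives $h_a(\alpha^s)=\beta\sum_jc_j+\sum_{i=q/2+1}^{q}\sum_jc_j\alpha^{i(j+a-s)}$, and, since each $c_j\in{\mathbb F}_q$ and $\alpha^q=\alpha^{-1}$, the conjugate is $h_a(\alpha^s)^q=\beta^q\sum_jc_j+\sum_{i=1}^{q/2}\sum_jc_j\alpha^{i(j+a-s)}$; adding these and using $\beta+\beta^q=1$ collapses everything to $\sum_jc_j\sum_{i=0}^{q}\alpha^{i(j+a-s)}$, and this equals $c_{s-a}$ because $\sum_{i=0}^{q}\alpha^{i\ell}$ is $0$ unless $(q+1)\mid\ell$, in which case it is $q+1=1$ in ${\mathbb F}_{q^2}$.

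I expect the only delicate part to be the bookkeeping around that choice: checking that $\{\tfrac{1}{2}q+1,\dots,q\}$ together with its reflection $i\mapsto q+1-i$ exactly covers $\{1,\dots,q\}$, with the self-paired index $i=0$ absorbed by $\beta$; that the additional indices $i\in\{\tfrac{1}{2}q+1,\dots,\tfrac{1}{2}q+r+1\}$ put into the first sum really are zeros of $g_2$, so that the claimed degree bound holds; and that $\tfrac{1}{2}q-r\geqslant 0$ and $\tfrac{1}{2}q+r+1\leqslant q$ (both equivalent to $1\leqslant k$), so the interval of zeros does not wrap around modulo $q+1$. None of this is conceptually new beyond Theorems~\ref{koddqodd} and~\ref{kevenqeven}: it is the same exponential-sum mechanism with the window of zeros translated by $\tfrac{1}{2}q$, the parity of $q$ making things a little easier here since there is only one self-paired exponent rather than two.
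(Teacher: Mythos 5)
Your proof is correct and is essentially the paper's own argument: after the reindexing $i\mapsto i+\tfrac{1}{2}q$ your $h_a$ is exactly the polynomial the paper uses, with the same degree bound and the same collapse of $h_a(\alpha^s)+h_a(\alpha^s)^q$ to $c_{s-a}$ via $\beta+\beta^q=1$ and orthogonality of the $(q+1)$-st roots of unity. You also correctly invoke Lemma~\ref{DERSiscyclic} (the odd-$k$ lemma) with no coordinate rescaling, whereas the paper's closing sentence cites Lemma~\ref{DERSiscyclic2} and $\langle g_1\rangle$, which is evidently a typo.
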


\begin{proof}
Let $c_j$ be defined by
$$
g_2(X)=\prod_{i=\tfrac{1}{2}q-r}^{\tfrac{1}{2}q+r+1} (X- \alpha^i)=\sum_{j=0}^{2r+2} c_j X^j.
$$
Observe that $k=q-2r-1$.

As in Theorem~\ref{koddqodd}, we look for polynomials $h_a(X)$ which allow us to apply Lemma~\ref{DERSiscyclic}.

For $a \in \{0,\ldots,k-1\}$, let
$$
h_a(X)=\sum_{i=1}^{\tfrac{1}{2}q} \sum_{j=0}^{2r+2} c_j \alpha^{(i+\frac{1}{2}q)(j+a)} X^{\tfrac{1}{2}q+1-i}+ \sum_{j=0}^{2r+2} c_j \beta.
$$
Observe that, for all $i \in \{\tfrac{1}{2}q+1,\ldots,\tfrac{1}{2}q+r+1\}$, 
$$
\sum_{j=0}^{q} c_j \alpha^{ij}=0,
$$
since $g_1(\alpha^i)=0$. Thus, the degree of $h_a$ is at most $\tfrac{1}{2}q+1-(r+2)=\frac{1}{2}(k-1)$. 

We have that
$$
h_a(\alpha^s)=\sum_{i=1}^{\tfrac{1}{2}q} \sum_{j=0}^{2r+2} c_j \alpha^{(i+\tfrac{1}{2}q)(j+a-s)}+ \sum_{j=0}^{2r+2} c_j \beta.
$$
and so
$$
h_a(\alpha^s)^q=\sum_{i=1}^{\tfrac{1}{2}q} \sum_{j=0}^{2r+2} c_j \alpha^{(-i+\tfrac{1}{2}q+1)(j+a-s)}+ \sum_{j=0}^{2r+2} c_j \beta^q.
$$

Since, $\beta+\beta^q=1$, it follows that
$$
h_a(\alpha^s)+h_a(\alpha^s)^q=\sum_{j=0}^{2r+2}\sum_{i=0}^{q}c_j\alpha^{i(j+a-s)}.
$$
Since $\sum_{i=0}^{q} \alpha^{ij}=0$ unless $j=0$, in which case it is one,
$$
h_a(\alpha^s)+h_a(\alpha^s)^q=  c_{s-a}.
$$
Lemma~\ref{DERSiscyclic2} implies that $\langle g_1\rangle$  is a generalised Reed-Solomon code.
\end{proof}

Finally, we deal with the case $k$ is even and $q$ is odd, which is the constacyclic code $\langle g_3 \rangle$.

\begin{theorem} \label{kevenqodd}
If $k$ is even and $q$ is odd then the $[q+1,k,q+2-k]_q$ code $\langle g_3 \rangle$ is a generalised Reed-Solomon code.
\end{theorem}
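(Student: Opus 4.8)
The plan is to mimic the proofs of Theorems~\ref{koddqodd}--\ref{koddqeven}. Since $k$ is even the relevant evaluation code is the one of Lemma~\ref{DERSiscyclic2}, so the aim is to exhibit explicit polynomials $h_a$ showing that, after rescaling each coordinate by a nonzero element of ${\mathbb F}_q$, the code $\langle g_3\rangle$ becomes the code $C$ of that lemma; monomial equivalence preserves the generalised Reed-Solomon property.

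First I would set up notation as in Theorem~\ref{koddqodd}: write $g_3(X)=\prod_{i=-r+1}^{r}(X-\omega\alpha^i)=\sum_{j=0}^{2r}c_jX^j$, note $k=q+1-2r$, and recall that $\langle g_3\rangle$ is spanned by the coordinate vectors $(\underbrace{0,\ldots,0}_{a},c_0,\ldots,c_{2r},\underbrace{0,\ldots,0}_{k-1-a})$, $a=0,\ldots,k-1$. Then I would record the facts driving the computation: $g_3(\omega\alpha^i)=\sum_j c_j\omega^j\alpha^{ij}=0$ for $i\in\{-r+1,\ldots,r\}$ modulo $q+1$; $\alpha=\omega^{q-1}$ (so $\omega^q=\omega\alpha$ and $\alpha^q=\alpha^{-1}$), $\omega^{q^2}=\omega$, $N:=\omega^{q+1}\in{\mathbb F}_q\setminus\{0\}$; $\alpha^{(q+1)/2}=-1$; and $\sum_{i=0}^{q}\alpha^{ij}=0$ unless $(q+1)\mid j$, in which case it equals $1$ (as $q+1\equiv 1$ in characteristic $p$). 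Taking $\omega_i=\omega^i$ in Lemma~\ref{DERSiscyclic2}, the problem reduces to finding, for each $a$, a polynomial $h_a\in{\mathbb F}_{q^2}[X]$ with $\deg h_a\le\tfrac12 k-1$ and $\omega^{sq}h_a(\alpha^s)+\omega^{s}h_a(\alpha^s)^q=\lambda_s\,c_{s-a}$ for scalars $\lambda_s\in{\mathbb F}_q\setminus\{0\}$ not depending on $a$.

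The ansatz I would try, following the pattern of Theorems~\ref{koddqodd} and~\ref{kevenqeven} but with $q$ odd and no $\beta$-terms (there are no self-conjugate frequencies here), is
$$
h_a(X)=\sum_{i=1}^{(q+1)/2}(\omega\alpha^i)^{a}\,g_3(\omega\alpha^i)\,X^{(q+1)/2-i}.
$$
The terms with $1\le i\le r$ vanish, so $\deg h_a\le\tfrac12 k-1$. To evaluate, I would use that the Frobenius sends the $i$-th coefficient $(\omega\alpha^i)^{a}g_3(\omega\alpha^i)$ to $(\omega\alpha^{1-i})^{a}g_3(\omega\alpha^{1-i})$ (since $g_3\in{\mathbb F}_q[X]$ and $(\omega\alpha^i)^{q}=\omega^q\alpha^{-i}=\omega\alpha^{1-i}$), while the powers of $\omega$ and $\alpha$ arrange so that both $\omega^{sq}h_a(\alpha^s)$ and $\omega^{s}h_a(\alpha^s)^q$ acquire the \emph{same} prefactor $\omega^{s\mu}$, with $\mu=q+\tfrac12(q^2-1)$. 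Re-indexing the conjugate sum by $i\mapsto 1-i$ then fuses the two sums into $\omega^{s\mu}\sum_{i=0}^{q}(\omega\alpha^i)^{a}g_3(\omega\alpha^i)\alpha^{-si}$; expanding $g_3$ and applying the orthogonality relation collapses this to $\omega^{s\mu}\cdot\omega^{s}c_{s-a}$, and since $\omega^{\mu+1}=-N$ one obtains $\omega^{sq}h_a(\alpha^s)+\omega^{s}h_a(\alpha^s)^q=(-N)^{s}c_{s-a}$. So $\lambda_s=(-N)^s$ works: by Lemma~\ref{DERSiscyclic2} the vectors $((-N)^{s}c_{s-a})_{s}$ lie in the generalised Reed-Solomon code $C$; they are obtained from a basis of $\langle g_3\rangle$ by scaling coordinate $s+1$ by $(-N)^{s}\in{\mathbb F}_q\setminus\{0\}$, hence span a $k$-dimensional, therefore all of, $C$; and $\langle g_3\rangle$ is then a generalised Reed-Solomon code.

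The step I expect to be the main obstacle is the exponent bookkeeping inside that computation: verifying that the prefactors of $\omega^{sq}h_a(\alpha^s)$ and $\omega^{s}h_a(\alpha^s)^{q}$ really coincide (equivalently $q+\tfrac12(q^2-1)\equiv 1-\tfrac12(q-1)^2\pmod{q^2-1}$ and $\omega^{\mu+1}=-N$), that $\{1,\ldots,(q+1)/2\}$ together with its image under $i\mapsto 1-i$ is a complete set of residues modulo $q+1$, and that no factor outside ${\mathbb F}_q$ depending on $a$ survives. This is routine, but it is the only place where the hypotheses ``$q$ odd, $k$ even'' genuinely enter, and where a sign or parity slip is most likely.
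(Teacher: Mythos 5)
Your proposal is correct and is essentially the paper's own proof: your $h_a(X)=\sum_{i=1}^{(q+1)/2}(\omega\alpha^i)^a g_3(\omega\alpha^i)X^{(q+1)/2-i}$ is exactly the paper's polynomial $\sum_{i}\sum_{j}\omega^{j+a}c_j\alpha^{i(j+a)}X^{(q+1)/2-i}$ written compactly, and your scaling factor $(-N)^s=(-\omega^{q+1})^s$ agrees with the paper's. The bookkeeping identities you flag (the congruence $q+\tfrac12(q^2-1)\equiv 1-\tfrac12(q-1)^2 \pmod{q^2-1}$, $\omega^{\mu+1}=-N$, and the completeness of the residue set under $i\mapsto 1-i$) all check out, so there is no gap.
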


\begin{proof}
Let $c_j$ be defined by
$$
g_3(X)=\prod_{i=-r+1}^{r} (X- \omega \alpha^i)=\sum_{j=0}^{2r} c_j X^j.
$$
Observe that $k=q-2r+1$.

As in Theorem~\ref{kevenqeven}, we look for polynomials $h_a(X)$ which allow us to apply Lemma~\ref{DERSiscyclic2}.

For $a \in \{0,\ldots,k-1\}$, let
$$
h_a(X)=\sum_{i=1}^{\tfrac{1}{2}(q+1)} \sum_{j=0}^{2r} \omega^{j+a} c_j \alpha^{i(j+a)} X^{\tfrac{1}{2}(q+1)-i}.
$$
Observe that, for all $i \in \{0,\ldots,r\}$, 
$$
\sum_{j=0}^{2r} c_j \omega^j \alpha^{ij}=0,
$$
since $g_3(\omega \alpha^i)=0$. Thus, the degree of $h_a$ is at most $\tfrac{1}{2}(q+1)-(r+1)=\frac{1}{2}k-1$. 

We have that
$$
h_a(\alpha^s)=\sum_{i=1}^{\tfrac{1}{2}(q+1)} \sum_{j=0}^{2r}  \omega^{j+a} c_j \alpha^{i(j+a-s)}(-1)^s.
$$
and, since $\omega^q=\omega \alpha$,
$$
\alpha^{-s}h_a(\alpha^s)^q=\sum_{i=1}^{\tfrac{1}{2}(q+1)} \sum_{j=0}^{2r}  \omega^{j+a}  c_j \alpha^{-(i-1)(j+a-s)}(-1)^s.
$$

Hence, it follows that
$$
h_a(\alpha^s)+\alpha^{-s}h_a(\alpha^s)^q=\sum_{i=1}^{q+1} \sum_{j=0}^{2r} \omega^{j+a} c_j \alpha^{i(j+a-s)}(-1)^s.
$$
Since $\sum_{i=1}^{q+1} \alpha^{ij}=0$ unless $j=0$, in which case it is one,
$$
h_a(\alpha^s)+\alpha^{-s}h_a(\alpha^s)^q= \omega^{s}(-1)^s c_{s-a}.
$$
Hence,
$$
\omega^{sq}h_a(\alpha^s)+\omega^{s}h_a(\alpha^s)^q= \omega^{s(q+1)}(-1)^s c_{s-a}.
$$
Lemma~\ref{DERSiscyclic2} implies that if we multiply the $(s+1)$-th coordinate of the codewords in $\langle g_3\rangle$ by $(-w^{(q+1)})^s$ then we obtain a generalised Reed-Solomon code, which implies that $\langle g_3\rangle$ is a generalised Reed-Solomon code.
\end{proof}

\section{Conclusions}

This note was motivated by Conjecture 11 from \cite{GR2015} which states that the minimum distance $d$ of the puncture code of the Grassl-R\"otteler code satisfies
$$
d= \left\{ \begin{array}{ll} 
2k & \mathrm{if}\ 1 \leqslant k \leqslant q/2 \\ 
(q+1)(k-(q-1)/2) & \mathrm{if} \ (q+1)/2 \leqslant k \leqslant q-1, \ q \ \mathrm{odd}  \\
q(k+1-q/2) & \mathrm{if} \ q/2 \leqslant k \leqslant q-1, \ q \ \mathrm{even}  \\
q^2+1 & \mathrm{if} \ k=q.
\end{array} \right.
$$
This conjecture is proven in \cite{BV2021} for generalised Reed-Solomon codes, which combined with the content of this note, implies that Conjecture 11 from \cite{GR2015} is indeed true.

It may be an interesting and worthwhile exercise to see if the other known $[q+1,k,q+2-k]_q$ MDS codes can be easily obtained as evaluation codes, evaluating at the $(q+1)$-st roots of unity. It may even be that the evaluation is over a more exotic set of elements in some extension of ${\mathbb F}_q$.
For completeness sake, we mention the other known $[q+1,k,q+2-k]_q$ MDS codes.

For $k=3$ and $q$ even, there are many examples known. These can all be extended to a $[q+2,k,q+3-k]_q$ MDS code. The columns of a generator matrix of such a code can be viewed as a set of points in the projective plane PG$(2,q)$. Such a set of points is known as a {\em hyperoval}. For a complete list of known hyperovals, see~\cite[Table 1]{BL2019}.

There are only two other known examples, up to duality.

The following is due to Segre \cite{Segre1959}. The linear code whose columns are the elements of the set
$$
\{ (1,t,t^{2^e},t^{2^e+1}) \ | \ t \in {\mathbb F}_q \} \cup \{ (0,0,0,1)\}
$$
is a $[q+1,4,q-2]_q$ linear MDS code, whenever $q=2^h$ and $(e,h)=1$.

The other is due to Glynn \cite{Glynn1986}. Let $\eta$ be an element of ${\mathbb F}_9$ such that $\eta^4=-1$. The linear code whose columns are the elements of the set
$$
\{ (1,t,t^{2}+\eta t^6,t^{3},t^4) \ | \ t \in {\mathbb F}_9\} \cup \{ (0,0,0,0,1)\}.
$$
is a $[10,5,6]_9$ linear MDS code,

\vspace{1cm}

   Simeon Ball\\
   Departament de Matem\`atiques, \\
Universitat Polit\`ecnica de Catalunya, \\
M\`odul C3, Campus Nord,\\
Carrer Jordi Girona 1-3,\\
08034 Barcelona, Spain \\
   {\tt simeon.michael.ball@upc.edu} \\

 \end{document}